\newcolumntype{C}[1]{>{\Centering}m{#1}}
\newtheorem{proposition}{Proposition}
\newtheorem{lemma}{Lemma}
\newenvironment{proof}[1][Proof]{\begin{trivlist}
\item[\hskip \labelsep {\bfseries #1}]}{$\triangleleft$\end{trivlist}}
\newcommand{\Kv}{{K}}
\newcommand{\hax}{{\hat{x}}}
\newcommand{\RR}{\mathbb R}
\def\beeq#1{\begin{equation}{#1}\end{equation}}
\def\be{\begin{equation}}
\def\ee{\end{equation}}
\def\ben{\begin{equation*}}
\def\een{\end{equation*}}
\def\ba{\begin{array}}
\def\ea{\end{array}}
\def\bea{\begin{eqnarray}}
\def\eea{\end{eqnarray}}
\def\beann{\begin{eqnarray*}}
\def\eeann{\end{eqnarray*}}
\def\bmx{\begin{matrix}}
\def\emx{\end{matrix}}
\def\bpmx{\begin{pmatrix}}
\def\epmx{\end{pmatrix}}
\newcommand{\norm}[1]{\left| #1 \right|}
\def\dst{\displaystyle}
\newcommand{\dnorm}[1]{\Vert #1 \Vert}
\begin{document}

\title{A High-Gain Nonlinear Observer \\ with Limited Gain Power}

\author{Daniele Astolfi and Lorenzo Marconi 
\thanks{D. Astolfi is with CASY-DEI, University of Bologna,
Italy and with MINES ParisTech, PSL Research University, CAS, Paris, France, (daniele.astolfi@unibo.it).}
\thanks{L. Marconi is with the CASY-DEI, University of Bologna,
Italy, (lorenzo.marconi@unibo.it).}
\thanks{Work supported by the European project SHERPA  (ICT 600958).}}

\maketitle

\begin{abstract}
In this note we deal with a new observer for nonlinear systems of dimension 
$n$ in canonical observability form.
We follow the standard high-gain paradigm, but instead of having an observer of dimension $n$ with
a gain that grows up to power $n$, we design an observer of dimension 
$2n-2$ with a gain that grows up only to power 2. 
\end{abstract}

\begin{IEEEkeywords}
Observability, nonlinear observers, high-gain observers.
\end{IEEEkeywords}

\section{Introduction}
In this note we consider the problem of state observation for nonlinear systems of the form
\be\label{sys_normal}
\dot z = f(z) + \bar d(t)\;, \qquad y = h(z) + \nu(t)
\ee
where $z \in {\cal Z} \subseteq \RR^n$ is the state, $y \in \RR$ is the measured output,
$f(\cdot)$ and $h(\cdot)$ are sufficiently smooth functions, $\bar d(t) \in \RR^n$ is a  bounded disturbance and $\nu\in \RR$ is the measurement noise. 
Among the different techniques for observer design available in literature (see \cite{GK}, \cite{Besancon}) we are particularly interested to the so-called high-gain 
methods that have been shown to be effective in many control scenarios. In this respect we assume that the pair $(f(\cdot), h(\cdot))$ fulfils an {\it uniform observability assumption} (see Definition 1.2 in \cite{GK}), which implies the existence of a diffeomorphism $\phi: {\cal Z} \to \RR^n$ such that the dynamic of the new state variable $x= \phi(z)$ is described by the {\it
canonical observability form}  (see Theorem 4.1 in \cite{GK})
\be \label{sys}
\dot x = A_nx + B_n\varphi(x) + d(x,t) \,,\qquad y = C_n x + \nu(t)\;,
\ee
where  $\varphi(\cdot)$ is a locally Lipschitz function, 
\[ d(x,t) := \left . {d\phi(z) \over dz} \right |_{z = \phi_{\rm inv}(x)} \bar d(t)\,,
\] 
 with $\phi_{\rm inv}(\cdot)$ the inverse of $\phi(\cdot)$ (namely $\phi_{\rm inv}\circ \phi(z)=z$ for all $z \in {\cal Z}$), 
 and $(A_n,B_n,C_n)$ is a triplet in ``prime form'' of dimension $n$, that is
\be \label{defABC}
\begin{array}{rcl}
A_n &=& \left( \ba{cc} {0}_{(n-1)\times 1} & I_{n-1} \\
0 & {0}_{1 \times (n-1)} \ea \right )\!,
\;\; B_n = \left( \ba{c} { 0}_{(n-1)\times 1} \\ 1 \ea \right )\!,\\
C_n &=& \left( \ba{cc}  1 & { 0}_{1 \times (n-1)} \ea \right ) \;.
\end{array}
\ee
System (\ref{sys}) is defined on a set ${\cal X} := \phi({\cal Z}) \subseteq \RR^n$.

For the class of systems (\ref{sys}), it is a well-known fact  (\cite{Bornard1991})  that the problem of asymptotically, in case $d(x,t) \equiv 0$ and $\nu(t)\equiv 0$,  estimating the state $x$ can be addressed by means of  a {\it high-gain} nonlinear observer of the form
\be\label{standardhgo}
\dot \hax =  A_n\, \hax + B_n\, \varphi_s(\hax) + D_n(\ell)\, \Kv_n \, \left(C_n\hax - y\right)
\ee
with 
$$
D_n(\ell) = {\rm diag}\, \Big(\,\ell\,,\;\ldots\;,\;\ell^n\,\Big) \;,\qquad 
K_n = \Big(\,c_1\;\;\cdots\;\;c_n\,\Big) ^\top\;,
$$
where $\ell$ is a high-gain design parameter taken sufficiently large (i.e. $\ell \geq \ell^\star$ with $\ell^\star\geq 1$), the $c_i$'s are chosen so that the matrix $(A+KC)$ is Hurwitz (i.e. all its eigenvalues are on the left-half complex plane),  and $\varphi_s(\cdot): \RR^n \to \RR$ is an appropriate "saturated" version of $\varphi(\cdot)$. 
As a matter of fact, it can be proved that if $d(x,t) \equiv 0$ and $\nu(t)\equiv 0$,  if $\varphi(\cdot)$ is uniformly Lipschitz in ${\cal X}$, namely there exists a $\bar \varphi >0$ such that 
\beeq{\label{unifLips}
\|\varphi(x') - \varphi(x'')\| \leq \bar \varphi \|x'-x''\| \quad \forall \, x',x'' \in {\cal X}\,,
}
and $\varphi_s(\cdot)$ is chosen bounded and to agree with $\varphi(\cdot)$ on $\cal X$,  
the observation error $e(t)=x(t) - \hat x(t)$ originating from (\ref{sys}) and (\ref{standardhgo}) exponentially converges to the  origin with an exponential decay rate of the form 
\[
 \|x(t) - \hat x(t)\| \leq \alpha \, \ell^{n-1}\,\mbox{exp}(-{\beta \ell} t )\|x(0)-\hat x(0)\|\;,
\]
where $\alpha$ and $\beta$ are positive constants,  for all possible initial condition $\hat x(0)\in \RR^n$ {\it as long as}  $x(t) \in \cal X$.  In particular, note that the exponential decay rate
 may be arbitrarily assigned by the value of $\ell$ with a polynomial "peaking" in $\ell$ of order $n-1$.  It is worth noting that the uniform Lipschitz condition 
 (\ref{unifLips}) is automatically fulfilled if $\cal X$ is a compact set. 
In case $d(x,t)$ or $\nu(t)$ are not identically zero,  as long as they are  bounded for all $t\geq 0$ and for all\footnote{boundedness of $d(x,t)$ is automatically guaranteed if $\cal X$ is compact. This, in turn, is the typical case when such observers are used in  semiglobal output feedback stabilisation problems, \cite{TeelPraly1994}.} $x \in \cal X$, the observer (\ref{standardhgo}) guarantees a 
bound on the estimation error that depends on the bound of $d(\cdot, \cdot)$, of $\nu(\cdot)$ and on the value of $\ell$. In particular, the following asymptotic bounds can be proved 
\begin{multline*}
 \lim_{t \to \infty} \sup \| x(t) - \hat x(t)\| \, \leq\\
 \gamma \, \max\left\{\,  \lim_{t \to \infty} \sup \|{1\over \ell} \, \Gamma(\ell)\, d(x(t),t)\|\,,
 \lim_{t \to \infty} \sup \|\ell^{n-1} \, \nu(t)\|\, \right\}
\end{multline*}
 where $\gamma$ is a positive constant and 
 \beeq{\label{Gamma}
 \Gamma(\ell)= \mbox{\rm diag}(\ell^{n-1}, \, \ldots, \ell, \,1)\,.
 }
  As above the previous asymptotic bound holds for all possible $\hat x(0) \in \RR^n$ as long as $x(t) \in \cal X$. 
Note that a high value of $\ell$ leads to an arbitrarily small asymptotic gain on the $n$-th component of the disturbance $d(\cdot,\cdot)$. On the other hand, a large value of $\ell$ is, in general, detrimental for the sensitivity of the asymptotic estimate to the sensor noise and to the first $n-2$ disturbance components.

Observers of the form (\ref{standardhgo}) are routinely used in many observation and control problems. For instance, the feature   of having an exponential decay rate and an asymptotic bound on the last component of $d$ that can be arbitrarily imposed by the value of $\ell$ is the main reason why the above observer plays a fundamental role in output feedback stabilisation and in setting up semiglobal nonlinear separation principles
 (\cite{TeelPraly1994},
 \cite{Atassi}). In that case the set $\cal X$ is an arbitrarily large compact set which is made invariant by the design of the state feedback stabilisation law and of the high-gain observer.   We observe that, although the asymptotic gain with respect to $\nu$ increase with $\ell^{n-1}$, the observer is anyway able to guarantee ISS with respect to the sensor noise
  (\cite{KhalilPraly}).
The main drawback of observers of the form (\ref{standardhgo}), though, is related to the increasing power (up to the order $n$) of the high-gain parameter $\ell$, which makes the practical numerical implementation an hard task when $n$ or $\ell$ are very large. 
Motivated by these considerations, in this note we propose a new observer for the class of systems (\ref{sys}) that preserves the same high-gain features of (\ref{standardhgo}) but which substantially overtakes the implementation problems due to the high-gain powered up to the order $n$.
Specifically, we present a high-gain observer structure with a gain which grows only up to power 2 (instead of $n$), at the price of having
the observer state dimension $2n-2$ instead of $n$.

\section{Main Result}

We start by presenting a technical lemma instrumental to the proof of the main result presented in Proposition \ref{MainResult}.
Let $E_i\in \RR^{2 \times 2}$, $Q_i\in \RR^{2 \times 2}$, $i=1, \ldots, n-1$, and $N\in \RR^{2\times 2}$  be matrices defined as  
$$
E_i= \left(\begin{array}{cc}
-k_{i1} & 1
\\
-k_{i 2} & 0
\end{array}\right)
\! ,\;
Q_i= \left(\begin{array}{cc}
0 & k_{i1}
\\
0 & k_{i 2}
\end{array}\right)
\!,\;
N= \left(
\begin{array}{cc}
0 & 0
\\
0 & 1
\end{array}\right)
$$
where $(k_{i1},k_{i2})$ are positive coefficients,
and let $M \in \RR^{(2n-2) \times (2n-2)}$ be  the block-tridiagonal matrix defined as
\be\label{Mdef}
M\;=\; \left(
\begin{array}{ccccccc}
E_1 &  N & 0&&\ldots& \ldots &0
\\
Q_2 &E_2 & N &\ddots &&&\vdots
\\
0&\ddots &\ddots &
\ddots &\ddots &&\vdots
\\
\vdots&\ddots& Q_i &E_i & N &\ddots &\vdots
\\
\vdots&&\ddots&\ddots &\ddots &
\ddots & 0
\\
\vdots&&&\ddots& Q_{n-2}& E_{n-2}
& N
\\
0&\ldots&\ldots&\ldots&0& Q_{n-1}& E_{n-1}
\end{array}\right) \;.
\ee
It turns out that the eigenvalues of $M$ can be arbitrarily assigned by appropriately choosing the coefficients $(k_{i1},k_{i2})$, $i=1,\ldots,n-1$, as claimed in the next lemma. 
 
\begin{lemma}\label{Lemma1}
Let ${\cal P}(\lambda) = \lambda^{2n-2}+ m_1 \lambda^{2n-3} + ... + m_{2n-3}\lambda + m_{2n-2}$ be an arbitrary Hurwitz polynomial. There exists a choice of $(k_{i1},k_{i2})$, $i=1,\ldots, n-1$, such that the characteristic polynomial of $M$ coincides with ${\cal P}(\lambda)$. 
\end{lemma}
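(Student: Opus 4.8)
The plan is to reduce the pole-assignment question to an explicit recursion for the characteristic polynomial of $M$, and then to solve it inductively on $n$. The structure of $M$ is block-tridiagonal with only nearest-neighbour coupling, so I would compute $\det(\lambda I - M)$ by eliminating one $2\times2$ stage at a time (equivalently, by solving $Mv=\lambda v$ for $v=(v_1,\dots,v_{n-1})$, $v_i\in\RR^2$, and propagating the relation from $v_{i-1}$ to $v_i$). Denoting by $\mathcal{P}_j(\lambda)$ the characteristic polynomial of the leading $2j\times 2j$ block of $M$, I expect the elimination to collapse — owing to the rank-one, aligned structure of $Q_j$ and $N$ — not to the usual three-term recursion but to the two-term relation
\[
\mathcal{P}_j(\lambda)=\lambda\,(\lambda+k_{j1})\,\mathcal{P}_{j-1}(\lambda)+k_{j2}\,\mathcal{P}_{j-1}(0),\qquad \mathcal{P}_1(\lambda)=\lambda^2+k_{11}\lambda+k_{12},
\]
with $\mathcal{P}_{n-1}(\lambda)=\det(\lambda I-M)$. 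This identity is the engine of the whole argument, so I would verify it directly for $n=2,3$ before trusting the general pattern.

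Given the recursion, I would argue by induction on $n$. The base case $n=2$ is immediate: $\lambda^2+k_{11}\lambda+k_{12}$ matches any degree-$2$ target via $k_{11}=m_1$, $k_{12}=m_2$. For the step, I read the recursion backwards at the top level: the constant term forces $k_{n-1,2}\,\mathcal{P}_{n-2}(0)=m_{2n-2}$, while
\[
\frac{\mathcal{P}(\lambda)-m_{2n-2}}{\lambda}=(\lambda+k_{n-1,1})\,\mathcal{P}_{n-2}(\lambda).
\]
The left-hand side is monic of odd degree $2n-3$ with positive constant term $m_{2n-3}$ (Hurwitz implies all $m_i>0$), hence it has a negative real root; I take $-k_{n-1,1}$ to be such a root, let $\mathcal{P}_{n-2}$ be the quotient, and set $k_{n-1,2}=m_{2n-2}/\mathcal{P}_{n-2}(0)$. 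Both new gains are then positive for free: the constant term of the quotient is $\mathcal{P}_{n-2}(0)=m_{2n-3}/k_{n-1,1}>0$, so $k_{n-1,2}>0$ automatically. This leaves a monic polynomial $\mathcal{P}_{n-2}$ of degree $2n-4$ to be realised at the previous stage.

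The delicate part is closing the induction, and this is where I expect the real obstacle to lie. The quotient $\mathcal{P}_{n-2}$ need \emph{not} be Hurwitz: a single real factor cannot remove a conjugate pair that has migrated into the right half-plane, and one checks already at degree $6$ that such targets occur. Consequently the inductive hypothesis cannot be "every Hurwitz polynomial is realisable"; it must be the a priori larger statement that every polynomial lying in the image of the recursion (the set realisable with positive gains) is reached. The crux is therefore to show that this realisable set contains all Hurwitz polynomials, i.e.\ that the repeated peeling never leaves it.

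To handle this I would extract from the recursion an explicit invariant carried by the $\mathcal{P}_j$ — positivity of all coefficients together with the inequalities that the map $\mathcal{P}_{j-1}\mapsto\mathcal{P}_j$ imposes on the leading coefficients — prove that this invariant is implied by the Hurwitz property of the top-level target, and check that it is preserved by each backward step (so that at every stage a negative real root with an admissible quotient can again be selected). A concrete first test of feasibility is to compute the realisable set in low dimension ($n=3$), confirm that it strictly contains the Hurwitz set yet is still reached by the reduction of any Hurwitz degree-$6$ target; matching that behaviour with a preserved invariant for general $n$ is the step I expect to require the most care.
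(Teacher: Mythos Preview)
Your two–term recursion
\[
\mathcal{P}_j(\lambda)=\lambda(\lambda+k_{j1})\,\mathcal{P}_{j-1}(\lambda)+k_{j2}\,\mathcal{P}_{j-1}(0)
\]
is correct and is exactly the engine the paper uses: the appendix writes the recursion in three–term form
\[
\mathcal{P}_{M_i}(\lambda)=\lambda(\lambda+k_{i1})\mathcal{P}_{M_{i-1}}(\lambda)+k_{i2}\big[\mathcal{P}_{M_{i-1}}(\lambda)-\lambda(\lambda+k_{(i-1)1})\mathcal{P}_{M_{i-2}}(\lambda)\big],
\]
but the bracket telescopes to the constant $\prod_{j\le i-1}k_{j2}=\mathcal{P}_{M_{i-1}}(0)$, which is precisely your expression. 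The backward ``peeling'' step you describe (find a real root of the odd–degree polynomial $(\mathcal{P}(\lambda)-\mathcal{P}(0))/\lambda$, take the quotient, read off $k_{i2}$) is also exactly the paper's ``basic assignment algorithm'', just phrased more transparently.

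Where you diverge is in what you think must be preserved along the induction. You insist on positive gains and therefore try to keep the intermediate $\mathcal{P}_{j}$ Hurwitz (or inside some invariant ``realisable set''); you then correctly observe that Hurwitz is \emph{not} preserved by the peeling, and you identify this as the hard step. The paper sidesteps the whole issue: its basic assignment algorithm is stated for an \emph{arbitrary} monic polynomial $\bar{\mathcal P}_i(\lambda)$, with no sign constraint on $(k_{i1},k_{i2})$, and the only fact used is that a polynomial of odd degree $2i-1$ in $k_{i1}$ has a real root. With that reading the correct induction hypothesis is ``every monic polynomial of degree $2j$ is $\mathcal{P}_{M_j}$ for some real $(k_{11},k_{12},\dots,k_{j1},k_{j2})$'', and this closes immediately --- the Hurwitz assumption on the top–level target is actually never used in the paper's proof. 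In short, your obstacle is self–imposed: drop the positivity requirement (the appendix does not establish it either, despite the wording in the definition of $M$) and your argument is complete and coincides with the paper's.
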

The proof of this Lemma is deferred to the appendix where a {\it constructive} procedure for designing $(k_{i1},k_{i2})$ given ${\cal P}(\lambda)$ is presented.\\[1mm]
The structure of the proposed observer has the following form  
\be\label{observer}
\begin{array}{rcl}
\dot \xi_i &=& A \,\xi_i + N \,  \xi_{i+1} + D_2(\ell) \, K_i \, e_i \qquad i=1,\ldots, n\!-\!2\\
&\vdots&\\
\dot  \xi_{n-1} &=& A \, \xi_{n-1} + B \, \varphi_s(\hax') +D_2(\ell) \, K_{(n-1)} \,e_{n-1} 
\end{array}
\ee
where $(A,B,C)$ is a triplet in prime form of dimension $2$, $\xi_i \in \RR^2$, $K_{i} = \left(k_{i1} \; k_{i2} \right )^T$, $D_2(\ell) = \mbox{diag}(\ell, \, \ell^2)$, 
 \[
  \hax' = L_1 \,  \xi  \qquad L_1 = \mbox{blkdiag}\ (\underbrace{C , \;  \ldots,\; C}_{(n-2) \; \mbox{times}}, I_2   )\]
  \[ \xi = \mbox{col}( \xi_1,\ldots,  \xi_{n-1})  \in \RR^{2n-2}  \,,
 \]
\[
  e_1 =  y - C \xi_1\, ,\qquad e_i = B^T  \xi_{i-1}- C \xi_i \qquad i=2, \ldots, n-1\,,
\]
 and $\varphi_s(\cdot)$ is an appropriate saturated version of $\varphi(\cdot)$. 

The variable $\hax'$ represents an asymptotic estimate of the state $x$ of (\ref{sys}). It is obtained by ``extracting" $n$ components from the state $\xi$ according to the matrix $L_1$ defined above.   As clarified next, the redundancy of the observer can be used to extract from $\xi$ an extra state estimation that is   
 \[
  \hat x'' = L_2 \, \xi  \qquad L_2 = \mbox{blkdiag}\ (I_2\,, \; \underbrace{B^T , \;  \ldots,\; B^T}_{(n-2) \; \mbox{times}} )\,.
  \]
 The following proposition shows that the observer (\ref{observer}) recovers the same asymptotic properties for the two estimates $\hat x'$ and $\hat x''$ of the ``standard" high-gain observer (\ref{standardhgo}). 
In the statement of the proposition we let 
 \[
  {\hat {\bf x}} = \mbox{col}(\hat x', \, \hat x'')\,,
  \qquad 
   {\bf x} = \mbox{col}(x, \, x)\,.
 \]

\begin{proposition}\label{MainResult}\label{Sec:Main}
Consider system \eqref{sys}
 and the observer \eqref{observer} with the coefficients $\left(k_{i1} \; k_{i2} \right )$ fixed so that the matrix $M$ defined in \eqref{Mdef} is Hurwitz (see Lemma \ref{Lemma1}).
  Let $\varphi_s(\cdot)$ be any bounded function that agrees with $\varphi(\cdot)$  on $\cal X$, and assume that $d(x,t)$ is bounded for all $x \in \cal X$ and for all 
 $t\geq 0$.  Then there exist $c_i>0$, $i=1,\ldots,4$, and $\ell^\star\geq1$ such that for any $\ell \geq \ell^\star$ and for any $\xi(0) \in \RR^{2n-2}$,
   the following bound holds
\begin{multline}\label{prop1}
 \| \hat {\bf x}(t) -  {\bf x}(t) \| \, \leq \, \max\{c_1\, \ell^{n-1} \,\mbox{\rm exp}(-{c_2  \ell}\, t) \, \|\hat {\bf x}(0) -  {\bf x}(0) \|\,, \\
  {c_3} \,\|{1\over \ell} \, \Gamma(\ell)\, d(\cdot)\|_\infty, \; {c_4} \, \ell^{n-1}\|\,\nu(\cdot)\|_\infty \}
\end{multline}
where $\Gamma(\ell)$ is as in \eqref{Gamma},  for all $t\geq 0$ such that $x(t) \in \cal X$.  
\end{proposition}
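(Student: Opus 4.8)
The plan is to reduce the $(2n-2)$-dimensional observation error to the high-gain form governed by the matrix $M$ of Lemma~\ref{Lemma1}, after which a Lyapunov/ISS argument yields \eqref{prop1}. Writing \eqref{sys} componentwise as $\dot x_j=x_{j+1}+d_j$ ($j<n$), $\dot x_n=\varphi(x)+d_n$, $y=x_1+\nu$, I would first introduce the block error $\varepsilon_i=\xi_i-\mathrm{col}(x_i,x_{i+1})\in\RR^2$, $i=1,\ldots,n-1$, the natural choice being the one under which $\xi_{i,1}\to x_i$ and $\xi_{i,2}\to x_{i+1}$, so that both $\hat x'$ and $\hat x''$ estimate $x$. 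Using $e_1=\nu-\varepsilon_{1,1}$ and $e_i=\varepsilon_{i-1,2}-\varepsilon_{i,1}$, a direct substitution gives, for a generic block, $\dot\varepsilon_{i,1}=-\ell k_{i1}\varepsilon_{i,1}+\varepsilon_{i,2}+\ell k_{i1}\varepsilon_{i-1,2}-d_i$ and $\dot\varepsilon_{i,2}=-\ell^2k_{i2}\varepsilon_{i,1}+\varepsilon_{i+1,2}+\ell^2k_{i2}\varepsilon_{i-1,2}-d_{i+1}$, with $\nu$ entering only through block $1$ and with the nonlinear residual $\varphi_s(\hat x')-\varphi(x)$ replacing the absent $\varepsilon_{n,2}$ term in the last block. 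The block-tridiagonal pattern ($E_i$ on the diagonal, $N$ above, $Q_i$ below) is already apparent at this stage.

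The crucial step is the high-gain rescaling $\eta_{i,1}=\varepsilon_{i,1}/\ell^{i-1}$, $\eta_{i,2}=\varepsilon_{i,2}/\ell^{i}$. I would verify by direct computation that this is precisely the scaling that collapses every $\ell$-dependence into a single factor: each block becomes $\dot\eta_i=\ell\,(M\eta)_i-\mathrm{col}(d_i/\ell^{i-1},\,d_{i+1}/\ell^{i})$, so that globally $\dot\eta=\ell M\eta+Gd+\ell\,b\,\nu+\Phi$, where $M$ is \eqref{Mdef}, $b=\mathrm{col}(k_{11},k_{12},0,\ldots,0)$, $G$ collects the disturbance weights, and $\Phi=\ell^{-(n-1)}[\varphi_s(\hat x')-\varphi(x)]$ acts on the last component only. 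By Lemma~\ref{Lemma1} the gains $(k_{i1},k_{i2})$ render $M$ Hurwitz. I expect the exact matching of the $E_i$, $N$ and $Q_i$ blocks against the scaled error—together with the cancellation described next—to be the main point requiring care.

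With $M$ Hurwitz I would take $P=P^{\top}>0$ solving $M^{\top}P+PM=-I$ and study $V=\eta^{\top}P\eta$, so that $\dot V=-\ell\,\|\eta\|^2+2\eta^{\top}P(Gd+\ell b\nu+\Phi)$. The decisive estimate is for $\Phi$: since $\varphi_s$ is bounded and (being a saturated version of the uniformly Lipschitz $\varphi$) globally Lipschitz with some constant $\bar\varphi_s$, and agrees with $\varphi$ on $\cal X$, whenever $x(t)\in\cal X$ one has $|\varphi_s(\hat x')-\varphi(x)|=|\varphi_s(\hat x')-\varphi_s(x)|\le\bar\varphi_s\|\hat x'-x\|$, while the rescaling gives $\|\hat x'-x\|\le\ell^{n-1}\|\eta\|$ (the worst power $\ell^{n-1}$ arising from $\varepsilon_{n-1,2}$). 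Hence $\|\Phi\|\le\bar\varphi_s\|\eta\|$ and the nonlinear contribution to $\dot V$ is $O(\bar\varphi_s\|\eta\|^2)$, dominated by $-\ell\|\eta\|^2$ once $\ell\ge\ell^{\star}$ with $\ell^{\star}$ of order $\bar\varphi_s\|P\|$. This cancellation of the $\ell^{-(n-1)}$ in $\Phi$ against the $\ell^{n-1}$ peaking of $\hat x'-x$ is exactly what makes gain power two sufficient.

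It then remains to close the estimate and return to the original error. From $\dot V\le-\tfrac{\ell}{2}\|\eta\|^2+2\|P\|\,\|\eta\|\,(\|Gd\|+\ell\|b\|\,|\nu|)$ the standard ISS-Lyapunov argument gives $\|\eta(t)\|\le\max\{\kappa\,\|\eta(0)\|\,e^{-\lambda\ell t},\,\mu\|b\|\,\|\nu\|_\infty,\,(\mu/\ell)\|Gd\|_\infty\}$ for suitable constants, the decay rate being proportional to $\ell$. Finally I would transfer this to $\hat{\bf x}-{\bf x}$ via $\|\hat{\bf x}-{\bf x}\|\le\sqrt2\,\ell^{n-1}\|\eta\|$ and $\|\eta(0)\|\le\|\hat{\bf x}(0)-{\bf x}(0)\|$, both immediate from the scaling and from the fact that the components of $\eta$ and of $\hat{\bf x}-{\bf x}$ coincide up to powers of $\ell\ge1$. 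The disturbance term acquires the stated form through the elementary identity $\ell^{n-2}\|Gd\|_\infty=\|\tfrac1\ell\Gamma(\ell)d\|_\infty$, and regrouping the three resulting contributions with $a+b\le2\max\{a,b\}$ yields \eqref{prop1} with appropriate $c_1,\ldots,c_4$, valid on the interval where $x(t)\in\cal X$.
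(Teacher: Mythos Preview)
Your proposal is correct and follows essentially the same route as the paper: the same block error $\xi_i-\mathrm{col}(x_i,x_{i+1})$, the same rescaling (your $\eta_{i,1}=\varepsilon_{i,1}/\ell^{i-1}$, $\eta_{i,2}=\varepsilon_{i,2}/\ell^{i}$ is exactly the paper's $\ell^{2-i}D_2(\ell)^{-1}\tilde\xi_i$), the same emergence of $\ell M$ and the same Lyapunov argument with $PM+M^{\top}P=-I$, followed by the same back-scaling via $\ell^{n-1}$ to recover \eqref{prop1}. The only nuance is that you explicitly invoke a global Lipschitz constant for $\varphi_s$ to bound $\Phi$, whereas the proposition literally assumes only boundedness; the paper's proof needs the same property (its inequality $\|\ell^{-(n-1)}\Delta\varphi_\ell\|\le\delta_1\|\varepsilon\|$ cannot hold for small $\varepsilon$ from boundedness alone), so this is a shared tacit hypothesis rather than a divergence in method.
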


\begin{proof}

Consider the change of variables
\[
\xi_i \; \mapsto \; \tilde \xi_i := \xi_i - \mbox{col}\left(x_i,\; x_{i+1} \right ) \qquad
i=1,\ldots, n-1\;,
\]
by which system (\ref{observer}) transforms as 
\[
\ba{rcl}
\dot {\tilde \xi}_1 &=& H_1 \, \tilde \xi_1 + N \,\tilde \xi_2- \bar d_1(t) + D_2(\ell)K_1 \nu (t)\\[2mm]
\dot{\tilde \xi}_i &=& H_i \, \tilde \xi_i + N \tilde \xi_{i+1} + D_2(\ell) \, K_i \, B^T \, \tilde \xi_{i-1} - \bar d_i(t) \\
&& \hspace*{4.5cm} i=2, \ldots, n-2\\[2mm]
\dot {\tilde \xi}_{n-1} &=& H_{n-1} \, \tilde \xi_{n-1} + D_2(\ell) \, K_{n-1} \, B^T \, \tilde \xi_{n-2} + \\[1em]
&&\hspace*{3.7cm} B \, \Delta \varphi(\tilde \xi,x) - \bar d_{n-1}(t)
\ea
\]
where 
$$H_i = A - D_2(\ell) \,K_i \,C\,, \quad   \bar d_i(t)=(d_i(x(t), t),d_{i+1}(x(t), t))^T\,,  $$ 
for $i=1,\ldots, n-1$,
 with $d_i(t)$  the $i$-th element of the vector $d(t)$, 
 $\tilde \xi = \mbox{col}(\tilde \xi_1, \ldots,  \tilde \xi_{n-1} )$, and 
\[
\Delta \varphi(\tilde \xi,x) =  \varphi_s(L_1 \tilde \xi + x)  - \varphi(x)\;.
\]
Rescale now the variables $\tilde \xi_i$ as follows 
\[
 \tilde \xi_i \; \mapsto \;    \varepsilon_i := \ell^{2-i} \, D_2(\ell)^{-1} \, \tilde \xi_i \qquad i=1,\ldots,n-1\,.
\]
By letting $\varepsilon = \mbox{col} (\varepsilon_1 \; \ldots \; \varepsilon_{n-1})$, an easy calculation shows that 
\beeq{\label{varepsilon}
 \dot \varepsilon = \ell M \varepsilon + \ell^{-(n-1)} \,\left (  B_{2n-2} \Delta \varphi_\ell(\varepsilon,x) +\upsilon_{\ell}(t) 
+  n_\ell(t)
 \right )\;,
}
where $B_{2n-2}$ is the zero column vector of dimension $2n-2$ with a $1$ in the last position, and 
\begin{multline*}
\upsilon_{\ell}(t) = - \mbox{col}\left (\ell^{n}D_2(\ell)^{-1} \bar d_1,\;\ldots , \; \ell^{n-i+1}D_2(\ell)^{-1} \bar d_i,\; \ldots, \; \right.\\
\left.\ell^2D_2(\ell)^{-1} \bar d_{n-1} \right  )\;,
\end{multline*}
$$
 n_\ell (t) = \ell^n \bar K_1 \nu (t) \;, \qquad \bar K_1 = \mbox{col}(K_1, 0, \ldots, 0)\;,
$$
and 
$$
\Delta \varphi_\ell(\varepsilon,x)=  \varphi_s(L_1\, S(\ell) \varepsilon + x)  - \varphi(x)\;,
$$
where $S(\ell) = \mbox{diag}({1 \over \ell}\,, \; 1\,, \; \ell\,, \; \ldots, \ell^{n-3} ) \otimes D_2(\ell)$. 
  Being $\varphi(\cdot)$ uniformly Lipschitz in $\cal X$ and $\varphi_s(\cdot)$ bounded, 
there exists a $\delta_1> 0$, $\delta_2>0$ and $\delta_3>0$ such that
\[\ba{rcl}
\|\ell^{-(n-1)} \, \Delta \varphi_\ell(\varepsilon,x)\| & \leq & \delta_1 \, \|\varepsilon\|\,, \\[.5em]
 \| \ell^{-(n-1)} \upsilon_{\ell}(t) \|  & \leq & \delta_2 \, \|\ell \, D_n(\ell)^{-1}\, d(x(t), t)  \| \;, 
 \\[.5em]
\dnorm{\ell^{-(n-1)}  n_\ell (t)} & \leq  &\delta_3 \dnorm {\ell \,\nu(t)} \;,
\ea\]
for all $\varepsilon \in \RR^{2n-2}$, $x \in \cal X$ and $\ell \geq 1$.  The rest of the proof follows standard Lyapunov arguments that, for sake of completeness, are briefly recalled. Let $P=P^T$ be such that $P M + M^T P = -I$ and consider the Lyapunov function $V= \varepsilon^T P \varepsilon$.  Taking derivative of $V$ along the solutions of (\ref{varepsilon}), using the previous bounds and letting $\ell^\star = 4 \, \delta_1 \| P\|$, one obtains that there exist  positive constants $a_1$, $a_2$, $a_3$ such that for any $\ell \geq \ell^\star$
\begin{multline*}
 \| \varepsilon \| \geq \max\left\{a_1 \|D_n(\ell)^{-1}\,  d(x(t), t)  \|, a_2 \dnorm{\nu(t)} \right\}
 \qquad \Rightarrow  \\[.5em]
 \qquad \dot V \leq -a_3 \, \ell \, \|\varepsilon\|^2 \,.
\end{multline*}
As $P$ is symmetric and definite positive, it turns out that $\underline{\lambda} \|\varepsilon\|^2\leq V(\varepsilon) \leq \bar{\lambda} \|\varepsilon\|^2$ where $\underline{\lambda}$ and $\bar{\lambda}$ are respectively the smallest and the highest eigenvalue of $P$. By using these bounds, the previous implication  leads to conclude that, as long as 
$$
V \geq \bar{\lambda }\, \max\left\{ a_1^2 \, \|D_n(\ell)^{-1}\, d(x(t), t) \|^2\;, a_2^2 \,\dnorm{\nu(t)} \right\}
$$
then $V(t) \leq \mbox{exp}( -{a_3 \, \ell \over \bar{\lambda}} t) V(0)$. By using again the bound on $V$ in terms  of $\underline{\lambda}$ and $\bar{\lambda}$, the following estimate on $\varepsilon(t)$ can be easily obtained 
\begin{multline*}
 \|\varepsilon(t)\| \leq \max\{ \, a_4 \, \mbox{\rm exp} (- a_5 \, \ell \, t ) \, \|\varepsilon(0)\|\,, \; \\ a_6 \, \|D_n(\ell)^{-1}\, d(x(t), t)  \| \,, 
\; a_7\dnorm {\nu(t)}\} 
\end{multline*}
where $a_4 = \sqrt{\bar{\lambda}/ \underline{\lambda}}$, $a_5 = a_3 /2 \bar{\lambda}$, $a_6=\sqrt{\bar{\lambda}/ \underline{\lambda}} \, a_1$, 
$a_7=\sqrt{\bar{\lambda}/ \underline{\lambda}} \, a_2$. 
Now, using the fact that, for all  $\ell>1$, ${\ell^{-(n-1)}} \,\| \tilde \xi\| \leq \|\varepsilon\| \leq \| \tilde \xi\|$, the previous bound  leads to the following estimate on $\tilde \xi(t)$
\begin{multline*}
 \|\tilde \xi(t)\| \leq \max\{ \, a_4 \,\ell^{n-1} \mbox{\rm exp} (- a_5 \, \ell \, t ) \, \|\tilde \xi(0)\|\,, \;\\ a_6 \,\ell^{n-1} \,\|D_n(\ell)^{-1}\, d(x(t), t)  \| \,, 
 \;  a_7 \, \ell^{n-1}\dnorm{\nu(t)} \} 
\end{multline*}
by which the claim of the proposition immediately follows by bearing in mind the definition of $\Gamma(\ell)$, $\hat {\bf x}$, $\bf x$, and by noting that $\|\tilde \xi\| \leq \|\hat {\bf x} - {\bf x} \| \leq 2 \|\tilde \xi\|$.\,\end{proof}



\section{About the sensitivity of the observer to high frequency noise in the linear case}
\label{sec:Sensitivity}
The trade-off between the speed of the state estimation and 
the sensitivity to measurement noise is a well-known fact in the observer theory. In this respect, high-gain observers tuned to obtain fast estimation dynamics are necessarily very sensitive to high-frequency noise.  Bounds on the estimation error in presence of measurement noise 
for the standard high-gain observers have been studied, for instance, in  \cite{Vasil} and \cite{Ball}, and
different techniques have been developed in order to improve rejection, mainly based on gain adaptation (see, among others,  \cite{Ahrens}, \cite{Sanfelice}).  

In this section we compare the properties of the standard high-gain observer \eqref{standardhgo} and the proposed observer \eqref{observer} 
with respect to high-frequency measurement noise by specialising the analysis to linear systems.  

In particular we consider systems of the form \eqref{sys} with $\varphi(\cdot)$ a linear function of the form $\varphi(x) = \Phi \, x$ where $\Phi$ is a row vector of dimension $n$.
Moreover, in this contest, we consider $d(t)\equiv 0$ and 
\be\label{noise}
\nu(t) = a_N \sin(\omega_N \,t + f_N)
\ee
where $a_N>0$, $\omega_N>0$, and $f_N$ are constants. It is shown that the ratio between the asymptotic estimation error on the $i$-th   state variable provided by the new observer \eqref{observer}  and the one provided by the standard observer \eqref{standardhgo} is a strictly decreasing polynomial function of the noise frequency for $i=2,\ldots, n$. In this regard the new observer has better asymptotic properties with respect to high-frequency noise as far as the state estimation variables are concerned (except for the first one). This is formalised in the next proposition.

\begin{proposition}
Let $n \geq 1$. Let $K_n\in {\mathbb R}^n$,  $K_i\in  {\mathbb R}^2$, $i=1,\ldots, n-1$ and $\ell \in  {\mathbb R}$ be fixed so that the error dynamics of the observers \eqref{standardhgo} and \eqref{observer} are Hurwitz.
Moreover, with $\rho \in \{1,2,\ldots, n\}$  denoting the position index of the first non zero coefficient of the vector $\Phi$ (and $\rho=n$ if $\Phi$ is the zero vector), let $r_i' \geq 1$ be the constants defined as
 $$
r_i' = \min\,\{\;i\;,\; (n-1)\;,\;(\rho+n-i+1)\} \qquad \quad i=1,\ldots,n\;.
$$
There exist $\omega_N^\star>0$ and $\bar c_i>0$ such that for all $\omega_N > \omega_N^\star$, $a_N>0$ and $f_N$ we have
$$
\dfrac{\dst \limsup_{t\to+\infty} \norm{\hax'_i(t)-x_i(t)}}{\dst \limsup_{t\to+\infty} \norm{\hax_i(t)-x_i(t)}} \; \leq \; 
\bar c_i\; \omega_N^{-(r_i'-1)}
\qquad \forall \; i = 1, \ldots, n.
$$
\end{proposition}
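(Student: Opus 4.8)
\emph{The plan is to} exploit that, under the linear specialisation ($\varphi(x)=\Phi x$, $d\equiv0$), both error systems are linear and time-invariant and are driven by the single exogenous signal $\nu$, so that everything reduces to frequency-response magnitudes. For the standard observer, setting $e=x-\hat x$ gives $\dot e=(A_n+B_n\Phi+D_n(\ell)K_nC_n)\,e+D_n(\ell)K_n\,\nu$. For the new observer, the computation already performed in the proof of Proposition~\ref{MainResult}, specialised to $d\equiv0$ and $\Delta\varphi(\tilde\xi,x)=\Phi L_1\tilde\xi$, yields a linear system $\dot{\tilde\xi}=\mathcal F(\ell)\,\tilde\xi+\mathrm{col}(D_2(\ell)K_1,0,\dots,0)\,\nu$, with each error $\hat x_i'-x_i$ a fixed linear functional of $\tilde\xi$ (namely $(\tilde\xi_i)_1$ for $i\le n-1$ and $(\tilde\xi_{n-1})_2$ for $i=n$). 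Since the gains are fixed so that both error matrices are Hurwitz, neither transfer function has imaginary-axis poles, so the steady-state response to $\nu(t)=a_N\sin(\omega_N t+f_N)$ is a sinusoid of amplitude $a_N\,|G_i^{\mathrm{std}}(j\omega_N)|$, respectively $a_N\,|G_i^{\mathrm{new}}(j\omega_N)|$, where $G_i^{\mathrm{std}},G_i^{\mathrm{new}}$ denote the transfer functions from $\nu$ to the $i$-th estimation error. Hence the ratio in the statement equals $|G_i^{\mathrm{new}}(j\omega_N)|/|G_i^{\mathrm{std}}(j\omega_N)|$, independent of $a_N$ and $f_N$, as required.

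\emph{Next I would} use that for a strictly proper rational transfer function $|G(j\omega)|\sim|g_\infty|\,\omega^{-r}$ as $\omega\to\infty$, where $r$ is the relative degree (denominator degree minus numerator degree) and $g_\infty\neq0$ the leading coefficient. It therefore suffices to determine the two relative degrees. For the standard observer the noise enters \emph{every} error component through $D_n(\ell)K_n=(\ell c_1,\dots,\ell^n c_n)^\top$; since all $c_i\neq0$ (every coefficient of a Hurwitz companion polynomial is nonzero), one has $C_iD_n(\ell)K_n=\ell^i c_i\neq0$, so each $G_i^{\mathrm{std}}$ has relative degree exactly $1$ with leading coefficient bounded away from zero. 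This pins the denominator roll-off at $\omega^{-1}$.

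\emph{The heart of the argument}, and the step I expect to be the main obstacle, is showing that the relative degree of $G_i^{\mathrm{new}}$ equals $r_i'$. Now the noise is injected only into the \emph{first} two-dimensional subsystem and must propagate to reach the $i$-th estimate. I would read the relative degree off the signal-flow graph of the $\tilde\xi$-dynamics as ($1+$ the length of the shortest directed path from the injected states to the read-out state). Three edge families occur: the forward cascade coupling $D_2(\ell)K_iB^\top$ carrying $(\tilde\xi_{i-1})_2$ into stage $i$, costing one unit per stage and giving a forward distance $i-1$; the backward coupling $N$ from stage $i+1$ into stage $i$; and the output feedback $B\,\Phi L_1\tilde\xi$ into the last stage, active precisely from index $\rho$ onward. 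The feedback is a shortcut — reach the read-out of stage $\rho$ in $\rho-1$ steps, enter stage $n-1$ in one step, then propagate back to stage $i$ along the $N$-edges — so that, combining it with the forward and backward routes, the shortest path length is $\min\{i-1,\,n-2,\,\rho+n-i\}$, whence $G_i^{\mathrm{new}}$ has relative degree $\min\{i,\,n-1,\,\rho+n-i+1\}=r_i'$. The delicate points are to verify that no strictly shorter route exists (which already gives $|G_i^{\mathrm{new}}(j\omega)|=O(\omega^{-r_i'})$ and suffices for the upper bound), and, should a matching lower bound be desired, that the leading coefficients along minimal paths do not cancel — guaranteed here because all gains $\ell,k_{i1},k_{i2}$ and the entry $\Phi_\rho$ are sign-definite.

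\emph{Finally}, combining the two roll-offs gives $|G_i^{\mathrm{new}}(j\omega_N)|/|G_i^{\mathrm{std}}(j\omega_N)|=O(\omega_N^{-(r_i'-1)})$ as $\omega_N\to\infty$; taking $\bar c_i$ slightly larger than the limiting constant and $\omega_N^\star$ large enough that the asymptotic estimate dominates the lower-order corrections yields the claimed inequality uniformly in $a_N$ and $f_N$, for every $i=1,\dots,n$. For $i=1$ one has $r_1'=1$ and the bound reduces to a constant, consistent with the new observer offering no improvement on the first estimate.
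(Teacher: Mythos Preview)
Your approach is essentially the same as the paper's: reduce to the ratio of two frequency-response magnitudes, then read off the high-frequency roll-off from the relative degrees of the two transfer families (relative degree $1$ for the standard observer, $r_i'$ for the new one). The paper carries out exactly this argument in the rescaled $\varepsilon$-coordinates rather than in your $\tilde\xi$-coordinates, which changes nothing since the two differ by a diagonal similarity and hence share the same relative degrees; the paper then simply asserts ``simple computations show that these systems have relative degree $r_i'$'' and invokes the bounds $|F_i(\omega)|\geq c_i\omega^{-1}$, $|F_i'(\omega)|\leq c_i'\omega^{-r_i'}$. Your signal-flow path-counting argument and your observation that every $c_i\neq 0$ (coefficients of a Hurwitz polynomial) are precisely the content of those omitted computations, so you have in fact supplied more detail than the paper does.
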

\begin{proof}
Consider system \eqref{sys} and the standard high-gain observer \eqref{standardhgo}. By letting  $e = \ell \, D_n(\ell)^{-1} \;(\hat x - x)$, the $e$-dynamics  read as
\be\label{pr1}
\dot e = \ell \big(A_n + K_n C_n\big) \, e + B_n \Phi \,\Theta_n(\ell)\,  e + \ell K_n \nu(t) 
\ee
where  $\Theta_n(\ell) = \dfrac{1}{\ell^n}\,D_n(\ell)$ with $D_n(\ell)$ defined in \eqref{standardhgo}. It is a linear system that is Hurwitz by the choices of $K_n$ and $\ell$.
Similarly, consider system \eqref{sys} and the new observer \eqref{observer}. With $\varepsilon$ defined as in the proof of Proposition 1 we  obtain system \eqref{varepsilon} compactly rewritten as
\be\label{pr2}
\dot \varepsilon = \ell M\, \varepsilon +
B_{2n-2}\Phi\, \Theta_n(\ell)L_1 \, \varepsilon +
\ell \bar K_1 \nu(t)\;.
\ee

It is an Hurwitz system by the choices of  $K_i$, $i=1,\ldots,  n-1$, and $\ell$. 
We consider now the $n$ systems given by the dynamics (\ref{pr1}) with input $\nu$ and with outputs 
 \[
 \hat x_i - x_i = \ell^{i-1} e_i\,, \qquad i =1,\ldots n\,, 
\]
and we denote by $F_i:{\mathbb R} \to {\mathbb C}$, $i=1,\ldots,n$, the harmonic transfer functions of these systems. A simple computation shows that these systems have relative degree $r_i=1$, for all $i=1,\ldots, n$. 
Similarly, we consider the $n$ systems given by the dynamics (\ref{pr2}) with input $\nu$ and outputs
\[
\ba{rcl}
 \hat x_i' - x_i &=& \ell^{i-1} \varepsilon_{i,1} \qquad i =1,\ldots n-1,\\
  \hat x_n' - x_n &=& \ell^{n-1} \varepsilon_{n-1,2}\,,
\ea
\]  
and we denote by $F'_i :{\mathbb R} \to {\mathbb C}$, $i=1,\ldots,n$,  the harmonic transfer functions of these systems. 
Simple computations show that these systems have relative degree $r_i'$, for all $i=1,\ldots, n$, with $r'_i$ defined in the statement of the proposition.
By definition of harmonic transfer function and by the fact that systems (\ref{pr1}) and (\ref{pr2}) are Hurwitz, it turns out that 
\[\ba{rcl}\displaystyle
\limsup_{t\to+\infty}\norm{\hax_i(t) - x_i(t)}&  = &\norm{F_i(\omega_N)} a_N\\
\displaystyle
\limsup_{t\to+\infty}\norm{\hax'_i(t) - x_i(t)}&  = &\norm{F'_i(\omega_N)} a_N
\ea
\]
for any $\omega_N\geq0$ and $\forall \,i=1,\ldots,n$.
Furthermore, by the fact that $F_i(\omega)$ and $F_i'(\omega)$ have, respectively, relative degrees $r_i=1$ and $r_i'$, $i=\,\ldots, n$, it turns out that  there exist positive $c_i$, $c_i'$ and $\omega_N^\star>0$ such that
\[
\ba{rcl}
|F_i(\omega)|  &\geq&  c_i \, \omega^{-1}\\
|F_i'(\omega)|   &\leq& c_i' \, \omega^{-r_i'}
  \ea
  \qquad \forall \, \omega \geq \omega_N^\star\,,
\]
by which the result immediately follows.
\end{proof}

\section{Example: Observer for the uncertain Van Der Pol Oscillator}
Let consider the uncertain Van der Pol oscillator
\be\label{eq:Vanderpol}
\ddot z = -\alpha^2 z + \beta (1 - z^2) \dot z  \;, \qquad \qquad y = z + \nu(t) \;,
\ee
where $y\in \RR$ is the measured output and $\alpha, \beta$ are  uncertain
constant parameters. We let $\mu = (\alpha^2,\beta)^\top$ and we assume that $\mu \in {\cal U}$, with ${\cal U}$ a compact set of $\RR^2$ not containing the origin. The state $(z,\dot z)$ belongs to a compact invariant set ${\cal W} \subset \RR^2$, which is the limit cycle of the Van Der Pol oscillator. We observe that $\cal W$ depends on $\mu$. Following  \cite{AstolfiNOLCOS}, system (\ref{eq:Vanderpol}) extended with $\dot \mu=0$ can be immersed  into a system in the canonical observability form \eqref{sys} with $n=5$. As a matter of fact,  let $z_{[i,j]} := {\rm col}\;(z^{(i)},\ldots,z^{(j)})$ be the vector of time derivatives of $z$, with $0\leq i < j$, and let  ${\cal X}$  be the compact set of $\RR^5$ such that $z_{[0,1]} \in \cal W$ $\Rightarrow$ $z_{[0,4]} \in  {\cal X}  $. Simple computations show that 
$$
z_{[2,4]} = \Upsilon(z_{[0,3]}) \;\mu\;,\qquad
z^{(5)} = \rho(z_{[0,4]}) \; \mu
$$
where 
$$
\Upsilon(z_{[0,3]})  = \bpmx 
-z & (1-z^{2})\dot z \\
-\dot z & \ddot z - 2y\dot z^{2} - z^{2} \ddot z \\
-\ddot z & z^{(3)}-2 \dot z^{3} -6 z \dot z \ddot z - z^{2} z^{(3)}
\epmx
$$
and 
$$
 \rho(z_{[0,4]}) =  \bpmx z^{(3)}\;,\;
z^{(4)}(1-z^2) - 12 \dot z^{2}  \ddot z - 6 z  \ddot z^2 - 8 z  \dot z z^{(3)}
\epmx
$$
with (see \cite{Forte})
\[
 \mbox{rank} \Upsilon(z_{[0,3]}) = 2 \qquad \mbox{for all } \;   z_{[0,4]} \in  {\cal X} \,.
\]
Hence, by letting 
\[
 \varphi(z_{[0,4]}) =    \rho(z_{[0,4]}) \; \hat \mu(z_{[0,3]})
\]
with
\[
\hat \mu(z_{[0,4]} ) =  \Upsilon^\dag (z_{[0,4]}) \; z_{[2,4]}\,,
\]
where $\Upsilon^\dag (\cdot)$ is the left-inverse of $\Upsilon(\cdot)$, and letting $x =z_{[0,4]}$, it is immediately seen that system (\ref{eq:Vanderpol}) and $\dot \mu =0$ restricted to ${\cal W} \times \cal U$ is immersed into the system
\be\label{eq:van_prime}
\dot x \;= \;A_5 \,x + B_5\,\varphi(x) \;,\qquad y \;= \;C_5 \,x + \nu(t)\;,
\ee
where $A_5,B_5,C_5$ is a triplet in prime form of dimension $5$.

By following the prescriptions of Section II, we implemented the proposed observer \eqref{observer} as 
\begin{equation}
\label{eq:observer}
\ba{rcl}
\dot {\xi}_1 & = & A \, \xi_1 + N \,\xi_2 + D_2(\ell_1)K_1 (y - C\xi_1) \\
& \vdots & \\
\dot {\xi}_4 & = & A \, \xi_4 + B\, \varphi_s(\hat x')  + D_2(\ell_1)K_4 (B^\top \xi_3 - C\xi_4)
\ea\end{equation}
where  
 $\varphi_s(\cdot)$ is any locally Lipschitz bounded function that agrees with $\varphi(\cdot)$ on $\cal X$, $\xi_i = (\xi_{i1}, \xi_{i2})$ and
the coefficients of $K_i$  are $k_{11} = 0.6$, $k_{12} = 0.3$, $k_{21} = 0.6$, $k_{22} = 0.111$, $k_{31} = 0.6$, $k_{32} = 0.0485$, 
$k_{41} = 0.6$, $k_{42} = 0.0178$, 
such that the roots of ${\cal P}(\lambda)$ are
$-0.1$, $-0.2$, $-0.2$, $-0.3$, $-0.3$, $-0.4$, $-0.4$, $-0.5$. With the same notation
 of \eqref{observer} we have $\xi = (\xi_1,\xi_2,\xi_3,\xi_4)$,
 $\hax'= L_1 \xi$ and $\hax'' = L_2 \xi$.
 
In the simulations we fixed $\alpha = 1$, $\beta = 0.5$, gain $\ell_1=100$ and initial conditions 
$(z,\dot z)=(1,0)$ for \eqref{eq:Vanderpol} and 
 $\xi = 0$ for \eqref{eq:observer}. 
Figures \ref{graph1} and \ref{graph2} show the error state estimate $\norm{\hax'-x}$ and 
$\norm{\hax''-x}$ of the proposed observer \eqref{eq:observer}
for the first two components (namely the estimation of the state of (\ref{eq:Vanderpol})) when there is not sensor noise.

By following Section III we compared the observer \eqref{eq:observer} with a standard
high-gain observer in presence of high-frequency sensor noise, numerically taken as $\nu(t) = 10^{-2}\sin(10^3 \,t)$. The high-gain observer has been implemented as
\be\label{eq:shgo}
\dot \hax = A_5 \hax + B_5 \varphi_s(\hax)  + D(\ell_2) K_5 (y - C_5 \hax) 
\ee
where   $K_5 = (1.5, \; 0.85, \; 0.225, \; 0.0274, \; 0.0012)^T$
so that the eigenvalues of $(A_5+K_5C_5)$ are
$-0.1$, $-0.2$, $-0.3$, $-0.4$, $-0.5$, and $\ell_2=100$.
 Table 1 shows the normalized asymptotic error magnitudes of the proposed observer 
\eqref{eq:observer}
 and the standard high-gain observer \eqref{eq:shgo}, where the normalized asymptotic error for the $i$-th estimate is defined as
$
\|\hat x_i - x_i \|_a =  \displaystyle \lim_{t\to\infty}\sup \|\hat x_i(t) - x_i(t)\| /  a_N 
$.
Although the result in Section III is given just for linear systems, the numerical results shown in the table show a  remarkable  improvement of the sensitivity to high-frequency measurement noise of the new observer with respect to the standard high-gain observer.\\

{\centering
{\footnotesize
\begin{tabularx}{\linewidth}{@{} lll @{} }
\toprule[1.5pt]
\multicolumn{1}{c}{\begin{tabular}[c]{@{}c@{}}Standard High Gain\\  Observer  $\hax$ \end{tabular}} 
& \multicolumn{1}{c}{\begin{tabular}[c]{@{}c@{}}Modified \\ Observer $\hax' = L_1 \xi$ \end{tabular}} 
& \multicolumn{1}{c}{\begin{tabular}[c]{@{}c@{}}Modified \\ Observer $\hax''= L_2 \xi$ \end{tabular}}  
\\\midrule
$\dnorm{\hax_1 - x_1}_a = 0.15$ 
& $\dnorm{\hax'_1 - x_1}_a = 0.06 $
& $\dnorm{\hax''_1 - x_1}_a = 0.06 $
\\ 
$\dnorm{\hax_2 - x_2}_a = 8$ 
& $\dnorm{\hax'_2 - x_2}_a = 0.2$
& $\dnorm{\hax''_2 - x_1}_a = 3 $
\\ 

$\dnorm{\hax_3 - x_3}_a = 2\cdot 10^2$ 
& $\dnorm{\hax_3' - x_3}_a =0.2$
& $\dnorm{\hax_3'' - x_3}_a =3$
\\
$\dnorm{\hax_4- x_4}_a = 2.5\cdot 10^3$
& $\dnorm{\hax_4' - x_4}_a =0.1$
& $\dnorm{\hax_4'' - x_4}_a =2$ 
\\
$\dnorm{\hax_5 - x_5}_a = 10^{4}$ 
& $\dnorm{\hax_5' - x_5}_a = 0.3$
& $\dnorm{\hax_5'' - x_5}_a = 0.3$
\\ 
\bottomrule[1.25pt]
\end {tabularx} \label{tab1} 

\par}
}{Table 1: Normalized asymptotic errors in presence of noise.}

\begin{center}
\begin{figure}[h]
\begin{center}\includegraphics[scale=.6]{./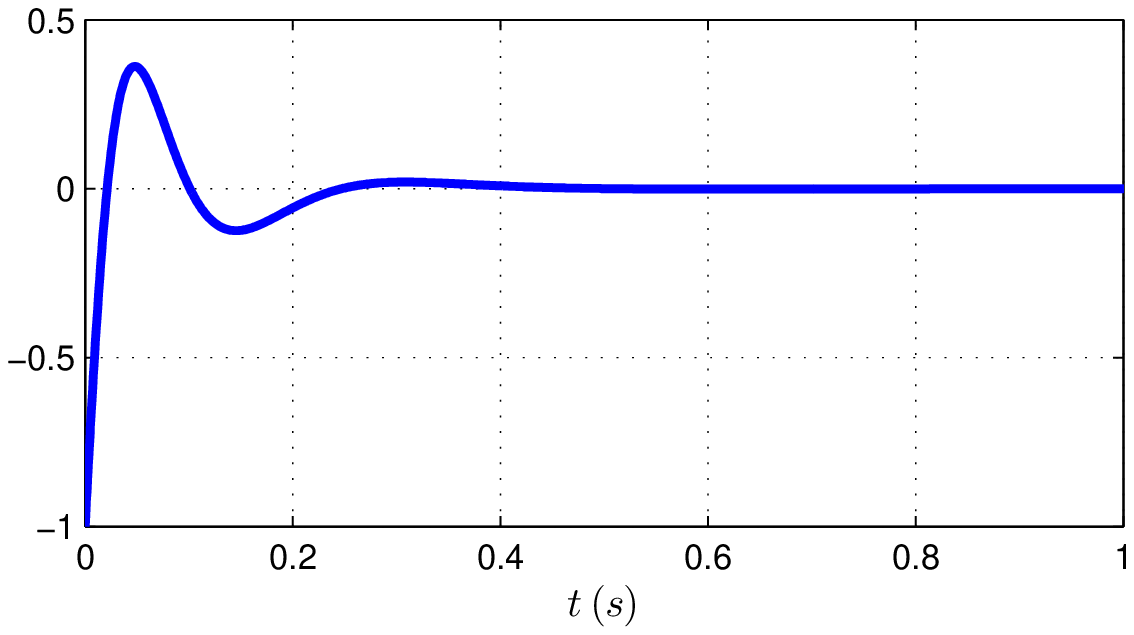}\end{center}
\caption{ Error state estimate $\norm{\hax''_1-x_1}$, $\norm{\hax'_1-x_1}$. }\label{graph1}
\end{figure}
\end{center}
\begin{center}
\begin{figure}[h]
\begin{center}\includegraphics[scale=.6]{./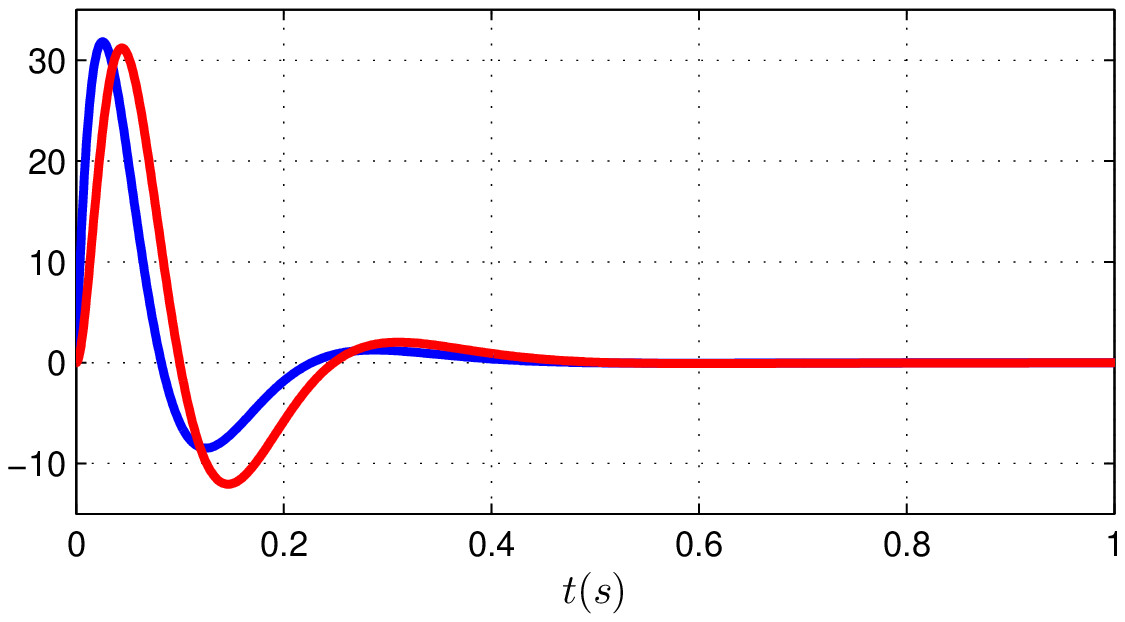}\end{center}
\caption{ Error state estimate  $\norm{\hax''_2-x_2}$ (blue line) and $\norm{\hax'_2-x_2}$ (red line).}\label{graph2}
\end{figure}
\end{center}

\section{Conclusions}
We presented a new observer design based on high-gain techniques
with a tunable state-estimate convergence speed.
With respect to standard high-gain observers the state dimension is larger ($2n-2$ instead of $n$) with 
a clear benefit in the observer implementation due to the power of the high-gain which is only $2$ and not $n$. 
Moreover, when specialised to linear systems, we showed the benefit of the proposed observer with respect to the standard high-gain observer
in terms of high-frequency noise rejection. Benefits that are clearly confirmed also  for the nonlinear Van-der Pol example numerically simulated in the previous section. A complete characterisation of the sensitivity to sensor noise of the new observer is an interesting research topic are that is now under investigation.
 
The  peaking phenomenon due to wrong initial conditions and fast convergence that is typical of high-gain observers is not prevented by our proposed structure. However, other techniques to deal with peaking (such as saturations, time-varying gains \cite{Sanfelice}, gradients techniques \cite{AstolfiCDC}, and others) are available and can be adopted to improve the proposed observer structure.
In this work we didn't consider the multi-output case. For the specific class of multi-output systems which are diffeomorphic to a block triangular form in which each block is associated to each output and it has a triangular dependence on the states of that subsystem (see  \cite{Atassi}), the  proposed structure can be simply applied block-wise to obtain a high-gain observer. Apart this case, a complete extension to the multi-output case is not immediate  and under investigation.\\

{\bf Acknowledgment.} We wish to thank Laurent Praly for suggesting the design procedure  presented in Appendix A.


\appendix

\subsection{Procedure to assign the eigenvalues of $M$}
Consider the matrices $M_i \in \RR^{2i-2} \times \RR^{2i-2}$ recursively defined as
\[
 M_1 = E_1\,, \qquad M_i = \left ( \ba{cc} M_{i-1} & \bar{N}_i\\ \bar{Q}_i & E_i \ea\right )\,, \qquad i=2,\ldots, n-1 
\]
where $\bar{N}_i= \mbox{col}(0_{2(i-2) \times 2}, \, N)$, 
$\bar{Q}_i= (0_{2(i-2) \times 2}, \, Q_i)$
and $E_i$, $i=1,\ldots,n-1$, $Q_i$, $i=2,\ldots, n-1$, and $N$ are defined as in the definition of $M$. Note that $M=M_{n-1}$ and, by letting $K_i=(k_{i1} \; k_{i2})^T$, note that $Q_i$ and $E_i$ depend on $K_i$, while $M_i$ depends on $K_1,\ldots, K_i$. We let ${\cal P}_{M_i}(\lambda)=\lambda^{2i} + m_1^i \lambda^{2 i -1} + \ldots + m^i_{2i-1} \lambda + m^i_{2i}$ and ${\cal P}_{M_{i-1}}(\lambda)=\lambda^{2i-2} + m_1^{i-1} \lambda^{2 i -3} + \ldots + m^{i-1}_{2i-3} \lambda + m^{i-1}_{2i-2}$ the characteristic polynomials of $M_i$ and $M_{i-1}$, and we use the notation $m^i_{[1,j]} = \mbox{col}(m_1^i, \ldots, m_j^i) \in \RR^{j}$,
$m^{i-1}_{[1,k]} = \mbox{col}(m_1^{i-1}, \ldots, m_k^{i-1}) \in \RR^{k}$ for some $j \leq 2i$ and $k\leq 2i-2$. 

The characteristic polynomial of ${\cal P}_{M_i}(\lambda)$ is computed as
\[\ba{rcl}
{\cal P}_{M_i}(\lambda) & = &  \lambda (\lambda + k_{i1}) {\cal P}_{M_{i-1}}(\lambda)
\\[.5em]
&& + k_{i2}\left[{\cal P}_{M_{i-1}}(\lambda) - \lambda(\lambda + k_{(i-1)1}){\cal P}_{M_{i-2}}(\lambda) \right]\,.
\ea\]

Hence, simple, although lengthy, computations show that the coefficients $m^i_{[1,2 i]}$ of ${\cal P}_{M_i}(\lambda)$ and $m^{i-1}_{[1,2i-2]}$ of ${\cal P}_{M_{i-1}}(\lambda)$  are related as follow
\beeq{\label{m_rec}
 \ba{rcl}
 m^i_{[1,2i-2]} &=& (I_{2i-2} + k_{i1} F) m^{i-1}_{[1,2i-2]} + k_{i1} \, {\rm v}_1\\[1mm]
 m^i_{2i-1} &=&  k_{i1}  \,m^{i-1}_{2i-2}  \\[1mm]
 m^i_{2i} &=& k_{i2} \,m^{i-1}_{2i-2}
 \ea
}
where ${\rm v}_1\in \RR^{2i-2}$ is the zero vector with a $1$ in the first position, and $F \in \RR^{(2i-2)\times (2i-2)}$ is the zero matrix with the identity matrix $I_{2i-3}$ in the lower left block. Note that $(I_{2i-2} + k_{i1} F)$ is invertible for all $k_{i1}$. Hence, from the first of (\ref{m_rec}), one obtains 
\[
m^{i-1}_{[1,2i-2]}  =\Lambda(m^{i}_{[1,2i-2]}, \, k_{i1})
\]  
where 
\[
\Lambda(m^{i}_{[1,2i-2]}, \, k_{i1}) =  (I_{2i-2} + k_{i1} F)^{-1} \, ( m^i_{[1,2i-2]} -  k_{i1} \, {\rm v}_1 \, )\,,
\]
which, embedded in the second and in the third of (\ref{m_rec}), yield the relations
\[
\sigma_1(m^i_{[1,2i-1]}, k_{i1})=0\,, \qquad k_{i2}=\sigma_2(m^i_{[1,2i]},\, k_{i1})
\]
where 
\[\ba{rcl}
 \sigma_1(m^i_{[1,2i-1]}, k_{i1})& = & k_{i1} \, {\rm v}_2^T \, \Lambda(m^{i}_{[1,2i-2]}, \, k_{i1}) -m^i_{2i-1}
\\[1em]
 \sigma_2(m^i_{[1,2i]},\, k_{i1})& = & \dst {m^i_{2i} \over {\rm v}_2^T \, \Lambda(m^{i}_{[1,2i-2]}, \, k_{i1})}
\ea\]
in which ${\rm v}_2 \in \RR^{2i-2}$ is the zero vector with a $1$ in the last position. We observe that $\sigma_1(\cdot,\cdot)$ is a polynomial in $k_{i1}$ of odd order $2i-1$. As a consequence, for any $m^i_{[1,2i-1]}$ there always exists at least one real $k_{i1}$ fulfilling  $\sigma_1(m^i_{[1,2i-1]}, k_{i1})=0$.

The previous results can be used to set up a "basic assignment algorithm" that is then used iteratively to solve the eigenvalues assignment of the matrix $M$.\\[1mm]
{\it Basic assignment algorithm}. Let $\bar {\cal P}_{i}(\lambda)=\lambda^{2i} + \bar m^i_1 \lambda^{2i-1} + \ldots + \bar m^i_{2i}$ be an arbitrary polynomial. Then, there exist a real $\bar K_i=(\bar k_{i1}, \bar k_{i2})^T$ and a polynomial $\bar {\cal P}_{i-1}(\lambda)=\lambda^{2i-1} + \bar m^{i-1}_1 \lambda^{2i-2} + \ldots + \bar m^{i-1}_{2i-2}$ such that
 \[
 \ba{rcl}
  K_i &=& \bar K_i\\ 
  {\cal P}_{M_{i-1}} &=& \bar {\cal P}_{i-1}(\lambda)
  \ea 
   \qquad \Rightarrow \qquad  
   {\cal P}_{M_{i}}(\lambda)=\bar {\cal P}_{i}(\lambda)\,.
 \]
As a matter of fact, by letting $\bar m^i_{[1,2i-1]}$ the coefficients of $\bar {\cal P}_{i}(\lambda)$, it is possible to take $\bar k_{i1}$ as a real solution of $\sigma_1(\bar m^i_{[1,2i-1]}, k_{i1})=0$,  $\bar k_{i2}=\sigma_2(\bar m^i_{[1,2i]},\, \bar k_{i1})$, and to take the coefficients $\bar m^{i-1}_{[1,2i-2]}$ of the polynomial $ \bar {\cal P}_{i-1}(\lambda)$ as   $\bar m^{i-1}_{[1,2i-2]} =\Lambda(\bar m^{i}_{[1,2i-2]}, \, \bar k_{i1})$ . $\triangleleft$\\[1mm]
With the previous algorithm in hand, the design of $K_1, \ldots, K_{n-1}$ to assign an arbitrary  characteristic polynomial to $M$, can be then immediately done by the following steps:

\begin{itemize}
 \item[1)] With $\bar {\cal P}_{n-1}(\lambda)$ the desired characteristic polynomial of $M$, compute $(\bar K_{n-1}, \bar {\cal P}_{n-2}(\lambda))$ by running the basic assignment algorithm with $i=n-1$.
 \item[2)] Compute iteratively $(\bar K_{i}, \bar {\cal P}_{i-1}(\lambda))$ by running the basic assignment algorithm for $i=n-2, \ldots, 2$.
 \item[3)] Compute $\bar K_1=(\bar k_{i1}, \bar k_{i2})^T$ so that $\lambda^2 + k_{i1} \lambda + k_{i2} = \bar {\cal P}_1(\lambda)$.
\end{itemize}

\end{document}